\theoremstyle{plain}
\newtheorem{thm}{Theorem}[section]
\newtheorem*{main theorem}{Theorem}
\theoremstyle{definition}
\begin{document}

\title{A characterization of causal automorphisms on two-dimensional Minkowski spacetime}
\author{Do-Hyung Kim}
\address{Department of Mathematics, College of Natural Science, Dankook University,
San 29, Anseo-dong, Dongnam-gu, Cheonan-si, Chungnam, 330-714,
Republic of Korea} \email{mathph@dankook.ac.kr}

\keywords{causal isomorphism, wave equation, causal relation}

\begin{abstract}
It is shown that causal automorphisms on two-dimensional Minkowski
spacetime can be characterized by the invariance of the wave
equations.
\end{abstract}

\maketitle

\section{Introduction} \label{section:1}

 In 1964, Zeeman clarified general forms of causal automorphisms
 on $\mathbb{R}^n_1$ with $n \geq 3$.(Ref. \cite{Zeeman}) As the
 title of his paper says, the result is that causality implies the
 Lorentz group.

 However, this is not the case for $\mathbb{R}^2_1$. In 2010,
 general forms of causal automorphisms on $\mathbb{R}^2_1$ is
 given and the result gives us much more symmetry than Zeeman's
 result.(Ref. \cite{CQG3}, \cite{CQG4}, \cite{Low} and
 \cite{Minguzzi})

 From this result, we can see that each component of the causal
 automorphisms satisfies the wave equation and so a new question is
 raised by Low.(Ref. \cite{Low}) His question is essentially how
 to characterize causal automorphisms by wave equations. A
 characterization of causal automorphisms on $\mathbb{R}^n_1$ with
 $n \geq 3$ is given in Ref. \cite{Wave}. The result is that
 causal automorphisms on
 $\mathbb{R}^n_1$ with $n \geq3$ can be characterized by the invariance of the wave
 equations. (Ref. \cite{Wave}). Physically, the invariance of the wave
 equations means the principle of the constancy of the speed of
 the light, which is the second postulate of special relativity.
 Therefore, we can say that Einstein's second postulate implies
 the preservation of causal relations as well as the Lorentz
 group,
 when the dimensions of spacetimes are greater than or equal to 3.

 In this paper, it is shown that $C^2$-causal automorphisms on
 $\mathbb{R}^2_1$ can be obtained by the invariance of the wave
 equations. Therefore, in conclusion, we can say that causal
 automorphisms on Minkowski spacetime can be characterized as the
 invariance of the wave equations, regardless of the spacetime dimension.

\section{Preliminaries} \label{section:2}

In Ref. \cite{CQG4}, the following is shown.

\begin{thm} \label{original}
Let $F : \mathbb{R}^2_1 \rightarrow \mathbb{R}^2_1$ be a causal
automorphism. Then, there exists unique homeomorphisms $\varphi$
and $\psi$ of $\mathbb{R}$, which are both increasing or both
decreasing such that if $\varphi$ and $\psi$ are increasing, then
we have $F(x,t)=\frac{1}{2}\big( \varphi(x+t) + \psi(x-t),
\varphi(x+t) - \psi(x-t) \big)$, of if $\varphi$ and $\psi$ are
decreasing, then we have $F(x,t)=\frac{1}{2} \big(
\varphi(x-t)+\psi(x+t), \varphi(x-t)-\psi(x+t) \big)$.

Conversely, for any given homeomorphisms $\varphi$ and $\psi$ of
$\mathbb{R}$, which are both increasing or both decreasing, the
function $F$ defined as above is a causal automorphism on
$\mathbb{R}^2_1$.
\end{thm}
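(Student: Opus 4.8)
The plan is to pass to null coordinates, in which the two-dimensional causal order becomes a product order, and then to recover the two null foliations purely from the causal relation, so that any causal automorphism is forced to separate variables into a pair of monotone maps.

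First I would set $u = x + t$ and $v = x - t$. Then the future causal cone of a point with coordinates $(u_0, v_0)$ is $\{(u,v) : u \ge u_0,\ v \le v_0\}$, so $p \le q$ holds exactly when $u_p \le u_q$ and $v_p \ge v_q$; that is, the causal order is a product order on the two null parameters. The null lines are precisely the level sets $\{u = c\}$ and $\{v = c\}$, and these split into two families (rulings): two null lines in the same family are disjoint or equal, whereas two from different families meet in a single point. The relation ``disjoint or equal'' is then an equivalence relation on null lines with exactly two classes, namely the two families.

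The crucial step is to characterize the null lines and the two families in purely order-theoretic terms, so that a causal automorphism $F$ must respect them. I would show that two distinct points $p, q$ lie on a common null line iff they are causally related, say $p \le q$, and the order interval $[p,q] = \{r : p \le r \le q\}$ is totally ordered: for $p \ll q$ the interval is a genuine two-dimensional diamond and contains incomparable points, while for a null-related pair it is a segment of a single null line, on which any two points are comparable. The null lines are then the maximal chains of this type. Since this description uses only $\le$, the bijection $F$ maps null lines to null lines and therefore permutes the collection of null lines; because ``same family'' is the order-theoretic relation ``disjoint or equal,'' $F$ either preserves the two families or interchanges them. In the family-preserving case $F$ carries the foliation $\{u = c\}$ into itself, so $u \circ F$ is constant along each $\{u = c\}$ and hence a function $\varphi$ of $u$ alone, and likewise $v \circ F$ is a function $\psi$ of $v$ alone; this gives $F(u,v) = (\varphi(u), \psi(v))$. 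In the interchanging case the same reasoning yields the swapped form $F(u,v) = (\varphi(v), \psi(u))$.

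It then remains to pin down monotonicity and regularity and to translate back to $(x,t)$. Since $F$ preserves the product order, matching the orientation of the cone forces $\varphi$ and $\psi$ to be increasing in the family-preserving case and decreasing in the interchanging case; because $F$ is a bijection of the plane, each of $\varphi, \psi$ is a bijection of $\mathbb{R}$, and a monotone bijection of $\mathbb{R}$ onto itself is automatically a homeomorphism. Substituting back through $x = (u+v)/2$ and $t = (u-v)/2$ reproduces exactly the two displayed formulas, with the increasing case giving the first and the decreasing case the second. The converse is then immediate: in null coordinates a map $(u,v) \mapsto (\varphi(u), \psi(v))$ with $\varphi, \psi$ increasing homeomorphisms is an order isomorphism for the product order, hence a causal automorphism, and the swapped map with decreasing homeomorphisms is likewise an order isomorphism after accounting for the reversal of roles. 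I expect the main obstacle to be the order-theoretic recovery of the null lines and their separation into two families, since once $F$ is known to act as a pair of monotone maps on the two null parameters everything else follows formally.
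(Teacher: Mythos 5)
Your proposal is correct. Note first that the paper does not actually present a proof of this theorem; it simply cites Theorem~2.2 of Ref.~\cite{CQG4}, and it presents the null--coordinate reformulation (Theorem~\ref{modified}) as a later simplification due to Low. Your argument is essentially that simplified route, pushed all the way back to a proof of the original statement: you pass to $u=x+t$, $v=x-t$, observe that the causal order becomes the product-type order $u\le u'$, $v\ge v'$, characterize the null lines order-theoretically (as maximal sets of pairwise causally related points with totally ordered order intervals, which correctly excludes timelike chains whose intervals are two-dimensional diamonds containing incomparable points), distinguish the two rulings by the order-invariant relation ``disjoint or equal,'' and conclude that $F$ separates variables into two monotone bijections, whose common monotonicity type is forced by order preservation and which are homeomorphisms because monotone bijections of $\R$ are. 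This is the approach of Refs.~\cite{Low} and \cite{Minguzzi} (product posets) rather than a direct light-cone computation in $(x,t)$ coordinates, and it buys a cleaner, purely order-theoretic argument in which the converse is immediate. Two small points you should still write out: the uniqueness claim (immediate, since $\varphi$ and $\psi$ are the components of $F$ in null coordinates and hence determined by $F$), and the fact that $F^{-1}$ is also a causal automorphism, which is what guarantees that null lines map \emph{onto} null lines and that the induced maps on the line parameters are bijections of $\R$.
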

\begin{proof}
See Theorem 2.2 in Ref. \cite{CQG4}.
\end{proof}

In Ref. \cite{Low}, Low has shown that if we use null coordinates,
the above result can be simplified.

If we use null coordinate system as $u=x+t$, $v=x-t$, then the
above Theorem can be expressed as following.

\begin{thm} \label{modified}
Let $(U,V) = F(u,v)$ be a causal automorphism on $\mathbb{R}^2_1$.
Then there exist unique homeomorphisms $\varphi$ and $\psi$ on
$\mathbb{R}$, which are both increasing, or both decreasing such
that, if $\varphi$ and $\psi$ are increasing, then we have $F(u,v)
= \big( \varphi(u), \psi(v) \big)$, or if $\varphi$ and $\psi$ are
decreasing, then we have $F(u,v) = \big( \varphi(v), \psi(u)
\big)$.

Conversely, for any given homeomorphisms $\varphi$ and $\psi$ of
$\mathbb{R}$, which are both increasing or both decreasing, the
function $F$ defined as above is a causal automorphism on
$\mathbb{R}^2_1$.
\end{thm}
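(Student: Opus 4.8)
The plan is to obtain this statement as a direct corollary of Theorem \ref{original} by rewriting that result in the null coordinate system $u = x+t$, $v = x-t$. Since this coordinate change is a fixed linear isomorphism of $\mathbb{R}^2$, the property of being a causal automorphism is a coordinate-free notion and is therefore unaffected; only the explicit formulas change. I thus expect the argument to reduce to a short substitution, the only real care being to keep the increasing and decreasing cases separate and to track which input variable feeds each output slot.

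First I would set up the coordinate dictionary: a point with standard coordinates $(x,t)$ has null coordinates $(u,v) = (x+t,\,x-t)$, and the same relation $(U,V) = (X+T,\,X-T)$ relates the null coordinates of the image point $(X,T) = F(x,t)$. It then suffices to compute $U$ and $V$ from the formulas of Theorem \ref{original}.

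In the increasing case, where $X = \tfrac{1}{2}(\varphi(x+t)+\psi(x-t))$ and $T = \tfrac{1}{2}(\varphi(x+t)-\psi(x-t))$, adding and subtracting gives $U = X+T = \varphi(x+t) = \varphi(u)$ and $V = X-T = \psi(x-t) = \psi(v)$, so that $F(u,v) = (\varphi(u),\psi(v))$. In the decreasing case the roles of the two arguments are interchanged: from $X = \tfrac{1}{2}(\varphi(x-t)+\psi(x+t))$ and $T = \tfrac{1}{2}(\varphi(x-t)-\psi(x+t))$ one obtains $U = \varphi(x-t) = \varphi(v)$ and $V = \psi(x+t) = \psi(u)$, hence $F(u,v) = (\varphi(v),\psi(u))$. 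The homeomorphisms $\varphi, \psi$, their common monotonicity, and their uniqueness are all inherited verbatim from Theorem \ref{original}.

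For the converse I would simply reverse the substitution: starting from either null-coordinate form and using $X = \tfrac{1}{2}(U+V)$, $T = \tfrac{1}{2}(U-V)$, I would recover exactly the standard-coordinate formulas, so that the converse half of Theorem \ref{original} applies and shows that $F$ is a causal automorphism. The main (and essentially only) obstacle is bookkeeping: ensuring the interchange of $\varphi$ and $\psi$ in the decreasing case is carried out consistently, since a single slip there would pair the wrong variable with the wrong component.
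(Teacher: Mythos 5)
Your proposal is correct and is exactly the derivation the paper intends: the paper itself states Theorem \ref{modified} as the null-coordinate restatement of Theorem \ref{original} (citing Low) without writing out the substitution, and your computation of $U = X+T$, $V = X-T$ in both the increasing and decreasing cases, together with the reversed substitution for the converse, fills in precisely that omitted bookkeeping. No gaps.
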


\section{Invariance of wave equations} \label{section:3}

If we use null coordinates $u$ and $v$, then the wave equation
$\frac{\partial^2 \theta}{\partial x^2} - \frac{\partial^2
\theta}{\partial t^2} = 0 $ can be simplified by $\frac{\partial^2
\theta}{\partial u
\partial v} = 0$.

We now prove the main Theorem.

\begin{thm}
Let $F : \mathbb{R}^2_1 \rightarrow \mathbb{R}^2_1$ be a $C^2$
diffeomorphism given by $(\sigma,\tau) = F(u,v)$ in terms of null
coordinates. Then the necessary and sufficient condition for $F$
to be a causal automorphism is that $\frac{\partial
(\sigma-\tau)}{\partial t} \geq 0$, $\sigma_u \tau_v + \sigma_v
\tau_u \geq 0$ and, for any $C^2$ function $\theta$ on
$\mathbb{R}^2_1$, $\theta_{uv}=0$ if and only if $\theta_{\sigma
\tau}=0$.
\end{thm}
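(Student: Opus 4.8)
The plan is to treat Theorem \ref{modified} as the target and show that the three stated conditions hold precisely when $F$ has one of the two canonical forms listed there. The decisive ingredient is the behaviour of the wave operator under the coordinate change $(\sigma,\tau)=F(u,v)$. Writing $\partial_u=\sigma_u\partial_\sigma+\tau_u\partial_\tau$ and $\partial_v=\sigma_v\partial_\sigma+\tau_v\partial_\tau$ and applying the product rule, I would record, for every $C^2$ function $\theta$, the identity
\begin{equation*}
\theta_{uv}=\sigma_{uv}\,\theta_\sigma+\tau_{uv}\,\theta_\tau+\sigma_u\sigma_v\,\theta_{\sigma\sigma}+(\sigma_u\tau_v+\sigma_v\tau_u)\,\theta_{\sigma\tau}+\tau_u\tau_v\,\theta_{\tau\tau}.
\end{equation*}
It is worth noting already that the coefficient of $\theta_{\sigma\tau}$ here is exactly the quantity occurring in the second hypothesis, so this one formula will carry almost the whole argument.

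First I would extract the structural consequences of the wave-equation condition, read as an equivalence of identities on $\R^2_1$. Since $F$ is a $C^2$ diffeomorphism of $\R^2$, the coordinate functions $\sigma,\tau$ and their products are genuine $C^2$ functions and may be used as test functions $\theta$. Taking $\theta=\sigma$ and $\theta=\tau$, for which $\theta_{\sigma\tau}\equiv0$, forces $\sigma_{uv}\equiv0$ and $\tau_{uv}\equiv0$; feeding $\theta=\sigma^2$ and $\theta=\tau^2$ into the resulting reduced identity then forces $\sigma_u\sigma_v\equiv0$ and $\tau_u\tau_v\equiv0$. From $\sigma_{uv}\equiv0$ I would write $\sigma=a(u)+b(v)$, the general $C^2$ solution of the one-dimensional wave equation on the simply connected domain $\R^2$, whence $\sigma_u\sigma_v=a'(u)b'(v)\equiv0$ forces $a'\equiv0$ or $b'\equiv0$; thus $\sigma$ is a function of $u$ alone or of $v$ alone, and likewise for $\tau$. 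Invertibility of $F$ excludes both depending on the same variable, leaving exactly the two alternatives $\sigma=\varphi(u),\ \tau=\psi(v)$ or $\sigma=\varphi(v),\ \tau=\psi(u)$, with $\varphi',\psi'$ nowhere zero by the non-vanishing of the Jacobian.

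It then remains to convert the two sign conditions into the monotonicity demanded by Theorem \ref{modified}. Using $t=(u-v)/2$, so that $\partial_t=\partial_u-\partial_v$, I would evaluate $\partial_t(\sigma-\tau)$ and $\sigma_u\tau_v+\sigma_v\tau_u$ in each alternative. In the first alternative these reduce to $\varphi'+\psi'\geq0$ and $\varphi'\psi'\geq0$, which (since neither derivative vanishes) force $\varphi'>0$ and $\psi'>0$, i.e.\ both increasing; in the second they become $-(\varphi'+\psi')\geq0$ and $\varphi'\psi'\geq0$, forcing both decreasing. These are precisely the hypotheses of Theorem \ref{modified}, so $F$ is a causal automorphism. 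For the converse I would start from the two canonical forms, note that $C^2$-ness gives $\varphi',\psi'\neq0$, and read off from the displayed identity that every coefficient except that of $\theta_{\sigma\tau}$ vanishes while $\sigma_u\tau_v+\sigma_v\tau_u=\varphi'\psi'\neq0$, so that $\theta_{uv}=\varphi'\psi'\,\theta_{\sigma\tau}$ delivers the wave-equation equivalence; the two sign conditions are then immediate from the monotonicity.

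The step I expect to be the main obstacle is the passage from the pointwise relation $\sigma_u\sigma_v\equiv0$ to the global conclusion that $\sigma$ depends on a single null variable: one must not argue merely pointwise, where at each point one factor happens to vanish, but instead exploit $\sigma_{uv}\equiv0$ to separate variables globally and then use connectedness of $\R$. A secondary point requiring care is the exact reading of the wave-equation hypothesis as an equivalence of identities, together with the choice of just enough polynomial test functions to kill every unwanted coefficient in turn; keeping this coefficient bookkeeping correct is what makes the final equivalence sharp in both directions rather than only one.
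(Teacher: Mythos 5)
Your proposal is correct and follows essentially the same route as the paper: the same chain-rule identity, the same test functions $\sigma,\tau,\sigma^2,\tau^2$ to force $\sigma_{uv}=\tau_{uv}=0$ and $\sigma_u\sigma_v=\tau_u\tau_v=0$, the same separation of variables plus Jacobian argument to reach the two canonical forms, and the same use of the two sign conditions to pin down monotonicity before invoking Theorem \ref{modified}. The only substantive difference is that you also spell out the necessity direction, which the paper dismisses as straightforward calculation.
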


\begin{proof}

For the proof of the necessary condition, we only need
straightforward calculations and Theorems in Section
\ref{section:2} and so we omit the proof.

To prove  sufficient part we note that, since $F$ is a
diffeomorphism, we can use $\sigma$ and $\tau$ as a new coordinate
system on $\mathbb{R}^2_1$ and its Jacobian $J(F)=\sigma_u \tau_v
- \sigma_v \tau_u$ is nowhere zero.

Step 1 :
 By the chain rule, we have the following relation.

\begin{eqnarray*}
 \frac{\partial^2 \theta}{\partial v \partial u} &=& \Big(
\frac{\partial^2 \theta}{\partial \sigma^2} \frac{\partial
\sigma}{\partial v} \frac{\partial \sigma}{\partial u} +
\frac{\partial^2 \theta}{\partial \tau \partial \sigma}
\frac{\partial \tau}{\partial v} \frac{\partial \sigma}{\partial
u} + \frac{\partial \theta}{\partial \sigma}\frac{\partial^2
\sigma}{\partial v \partial u} \Big) \\
&+& \Big( \frac{\partial^2 \theta}{\partial \sigma
\partial \tau}\frac{\partial
\sigma}{\partial v}\frac{\partial \tau}{\partial u} +
\frac{\partial^2 \theta}{\partial \tau^2}\frac{\partial
\tau}{\partial v}\frac{\partial \tau}{\partial u} + \frac{\partial
\theta}{\partial t}\frac{\partial^2 \tau}{\partial v \partial u}
\Big) \,\,\,\,\ \cdots \cdots (*)
\end{eqnarray*}

If we let $\theta = \sigma$, then it satisfies $\theta_{\sigma
\tau}=0$ and thus it must satisfy $\theta_{uv}=0$. If we put this
into the above equation, we have $\frac{\partial^2
\sigma}{\partial v
\partial u}=0$. Since the coordinate transformation $F$ is $C^2$,
we have $\sigma = \int f(u) du + g(v)$.

Likewise, if we put $\theta = \tau$ in the above equation, we
obtain $\frac{\partial^2 \tau}{\partial v \partial u} =0$ and so
we have $\tau = \int h(u) du + j(v)$.

If we put $\frac{\partial^2 \sigma}{\partial v \partial u}=0$ and
$\frac{\partial^2 \tau}{\partial v \partial u} =0$ into $(*)$, we
have the following.

\begin{eqnarray*}
 \frac{\partial^2 \theta}{\partial v \partial u} = \Big(
\frac{\partial^2 \theta}{\partial \sigma^2} \frac{\partial
\sigma}{\partial v} \frac{\partial \sigma}{\partial u} +
\frac{\partial^2 \theta}{\partial \tau \partial \sigma}
\frac{\partial \tau}{\partial v} \frac{\partial \sigma}{\partial
u} \Big)  + \Big( \frac{\partial^2 \theta}{\partial \sigma
\partial \tau}\frac{\partial
\sigma}{\partial v}\frac{\partial \tau}{\partial u} +
\frac{\partial^2 \theta}{\partial \tau^2}\frac{\partial
\tau}{\partial v}\frac{\partial \tau}{\partial u}  \Big)
\,\,\,\,\ \cdots \cdots (**)
\end{eqnarray*}
\

Step 2 : If we let $\theta = \sigma^2$, then it satisfies
$\theta_{\sigma \tau}=0$ and thus it must satisfy $\theta_{uv}=0$.
If we put this into $(**)$, we have $\frac{\partial
\sigma}{\partial v}\frac{\partial \sigma}{\partial u}=0$. Since
$\sigma = \int f(u) du + g(v)$, we have $f(u) \cdot g^\prime(v) =
0$.

Likewise, if we put $\theta = \tau^2$ into $(**)$, we have
$\frac{\partial \tau}{\partial v}\frac{\partial \tau}{\partial
u}=0$. Since $\tau = \int h(u) du + j(v)$, we have $h(u)
\cdot j^\prime(v)=0$.\\

Step 3 : We consider two separate cases.

 Case (i) : We assume
that there exists $u_0$ such that $f(u_0) \neq 0$. Then, since
$f(u) \cdot g^\prime(v) = 0$ for all $u$ and $v$, we must have
$g^\prime(v)=0$ for all $v$. Then, since the Jacobian
$J(F)=f(u)j^\prime(v) - h(u)g^\prime(v)$ is nowhere zero, $f(u)$
must be nowhere zero. Also, by the same reason, $j^\prime(v)$ is
nowhere zero and thus $j(v)$ is a homeomorphism. Since $h(u) \cdot
j^\prime(v) = 0$, $h(u)=0$ for all $u$. In conclusion, we have
$\sigma = \int f(u) du $ and $\tau = j(v)$, which are
homeomorphisms.

Case (ii) : We now assume that $f$ is identically zero. Then,
since $J(F)$ is nowhere zero, $g^\prime(v)$ is nowhere zero and
thus $g(v)$ is a homeomorphism. Since $J(F)$ is nowhere zero,
$h(u)$ must be nowhere zero and thus from $h(u)j^\prime(v)=0$, we
have $j^\prime(v)=0$ for all $v$. In conclusion, we have $\sigma =
g(v)$ and $\tau = \int h(u) du $, which are homeomorphisms.\\

Step 4 : We now consider the condition $\frac{\partial
(\sigma-\tau)}{\partial t} \geq 0$ which implies time-orientation
preservation. By chain rule, we have the following.

\begin{eqnarray*}
0 \leq \frac{\partial (\sigma-\tau)}{\partial t} &=&
\frac{\partial \sigma}{\partial t} - \frac{\partial
\tau}{\partial t}\\
 &=& \Big(\frac{\partial \sigma}{\partial
u}\frac{\partial u}{\partial t}+\frac{\partial \sigma}{\partial
v}\frac{\partial v}{\partial t} \Big) - \Big(\frac{\partial
\tau}{\partial u}\frac{\partial u}{\partial t}+\frac{\partial
\tau}{\partial v}\frac{\partial v}{\partial t}\Big)\\
&=& \Big(\frac{\partial \sigma}{\partial u}-\frac{\partial
\sigma}{\partial v} \Big) - \Big(\frac{\partial \tau}{\partial
u}-\frac{\partial \tau}{\partial v} \Big)\\
&=& \Big( f(u) - g^\prime(v) \Big) - \Big( h(u)-j^\prime(v)
\Big)\\
\end{eqnarray*}

Therefore, we have $f(u) + j^\prime(v) \geq h(u) + g^\prime(v)$.\\

Step 5: We consider two separate cases.

Case (i) : We assume that there exists $u_0$ such that $f(u_0)
\neq 0$. Then, from Step 3, we know that $g^\prime$ and $h(u)$ are
identically zero and so from Step 4, we have $f(u)+j^\prime(v)
\geq 0$. Since $\sigma_u \tau_v + \sigma_v \tau_u \geq 0$, we have
$f(u) \cdot j^\prime(v) \geq 0$. Therefore, both $f$ and
$j^\prime$ are positive functions and so $\sigma = \int f(u) du $
and $\tau = j(v)$ are increasing homeomorphisms.

Case (ii) : We now assume that $f$ is identically zero. Then, from
Step 3, we know that $f$ and $j^\prime$ are identically zero and
so from step 4, we have $h(u)+g^\prime(v) \leq 0$. Since $\sigma_u
\tau_v + \sigma_v \tau_u \geq 0$, we have $h(u) \cdot g^\prime(v)
\geq 0$. Therefore, both $h$ and $g^\prime$ are negative functions
and so $\sigma = g(v)$ and $\tau = \int h(u) du$ are decreasing
homeomorphisms. \\

Step 6 : From Theorem \ref{modified}, $(\sigma, \tau) = F(u,v)$ is
a causal automorphism on $\mathbb{R}^2_1$.\\

\end{proof}

\section{Acknowledgement}

The present research was conducted by the research fund of Dankook
University in 2012.

\end{document}